\newcommand{\re}{\mathbb R}
\newtheorem{definition}{Definition}
\newtheorem{algorithm}{Algorithm}
\newtheorem{theorem}{Theorem}
\begin{document}

\title{Generalized partially linear models on Riemannian manifolds}

\author{Amelia Sim\'o, M. Victoria Ib\'a\~{n}ez, Irene Epifanio and Vicent Gimeno \\
\small{Department of Mathematics-IMAC. Universitat Jaume I.
              Avda.~del Riu Sec s/n. 12071-Castell\'o, Spain.}}

\maketitle

\begin{abstract}
The generalized partially linear models on Riemannian manifolds are introduced. These models, like ordinary generalized linear models,  are a generalization of partially linear models on Riemannian manifolds that allow for response variables with error distribution models other than a normal distribution. Partially linear models are particularly useful when some of the covariates of the model are elements of a Riemannian manifold, because the curvature of these spaces makes it difficult to define parametric models. The model was developed to address an interesting application, the prediction of children's garment fit  based on 3D scanning of their body. For this reason, we focus on logistic and ordinal models and on the important and difficult case where the Riemannian manifold is the three-dimensional case of Kendall's shape space. An experimental study with a well-known 3D database is carried out to check the goodness of the procedure. Finally it is applied to a 3D database obtained from an anthropometric survey of the Spanish child population. A comparative study with related techniques is carried out.

\textbf{keyword}Shape space;Statistical shape analysis; Generalized linear models; Partially linear models; Kernel regression;  Children's wear.
\end{abstract}

\section{Introduction} \label{introduccion}
Classification problems arise in many real-life situations. A new observation has to be classified on the basis of a training set, which is described by a set of features whose class memberships are known. Supervised learning techniques have been widely studied when the features lie on a vector space \citep{HTF09}. When features do not form a vector space, well-known supervised learning techniques are not well suited to the classifiers \cite{pedestrian}. However, features can also take values on a Riemannian manifold. This is common in  fields such as astronomy, geology, meteorology, etc., which include natural distributions on spheres, tangent bundles and Lie groups \citep{manteigaetal12}. Another example, this time in the field of computer vision, would be the space of non-singular covariance matrices \citep{pedestrian}. One  discipline that certainly offers many examples in different fields of applications  (biology, medicine, chemistry, etc.) is statistical shape analysis \citep{DrydenMardia16}. Many problems involve
predicting a categorical variable as a function of the shape of an object that lies in a Riemannian manifold.

Although different approaches can be identified in shape analysis
based on how the object is treated in mathematical terms \citep{Stoyanetal95},
the majority of research has been restricted to landmark-based analysis,
where objects are represented using $k$ labeled points
in the Euclidean space $\re^m$.
These landmarks are required to appear in each data object,
and to correspond to each other in a physical sense.
Seminal papers on this topic are
\cite{Bookstein78}, \cite{Kendall84}, and \cite{Goodall91}.
The main references are \cite{DrydenMardia16} and \cite{Kendalletal09}.
In this paper we concentrate on this approach.

In a formal way, shape can be defined as the geometrical information
about the object that is invariant under a Euclidean similarity
transformation, i.~e.,~location, orientation, and scale.
The shape space is the resulting quotient space.
When the landmark-based approach is used,
the corresponding shape space is a finite-dimensional Riemannian manifold,
and statistical methodologies on manifolds must be used.
There are several difficulties in  generalizing
probability distributions and  statistical procedures to measurements in a
non-vectorial space like a Riemannian manifold,
but fortunately, there has been a significant amount of research and activity
in this area over recent years.
An excellent review can be found in \cite{Pennec06}.

The most immediate approach for solving the classification problem when predictive variables take values on a Riemmanian manifold would be to map the manifold to a Euclidean space, i.e. to flatten the manifold. But, in a general case,  mapping that globally
preserves the distances between the points on the manifold is not available. As a consequence, the flattened space would not represent the global structure of the points appropriately. Although statistical analysis of manifold-valued data has gained a great deal of attention in recent years, there is little literature on classification. In fact, to our knowledge the only reference is \cite{pedestrian}, and it is restricted to a binary classification problem. A LogitBoost \citep{friedman00additive} on Riemannian manifolds is proposed in \cite{pedestrian}. It is similar to the original LogitBoost, except for differences at the
level of weak learners (the regression functions are learned on the tangent
space at the weighted mean of the points).

Another related work is \cite{manteigaetal12}, but they studied a regression problem (the predicted variable is real valued) rather than a classification problem. They introduced partially linear models on Riemannian manifolds (robust estimators can be found in \cite{doi:10.1080/03610926.2013.775302}). Partially linear models were proposed by \cite{Engleetal86}. Since then, partially linear models have been used in the study of complex nonlinear problems, some recent examples are \cite{ZHANG2017114}, \cite{QIAN201777}, \cite{CUI2017103} and \cite{HILAFU20171}. In this semiparametric regression method, the dependent variable is modeled with a  parametric linear part and a nonparametric part. In \cite{manteigaetal12} the variable to be non-parametrically modeled is in a manifold, and they proposed a kernel-type estimator.

Based on this idea,  we can generalize partially linear models (GPLM) to solve the classification problem with features in a Riemmanian manifold. We benefit from the flexibility of partially linear models and at the same time we can include features in a Riemmanian manifold. To our knowledge, this is the first time GPLMs have been defined on Riemmanian manifolds. At the same time, we also propose a solution for the classification problem for more than two classes, the only case studied to date. In particular, we also introduce  a solution for when the dependent variable is ordinal. Furthermore, unlike the method proposed in \cite{pedestrian} where features in a Riemmanian manifold were the only predictive variables, other predictive variables together with those  in a Riemmanian manifold are managed jointly by our proposal.

This paper addresses an important current application: size fitting for online garment shops, in particular children's
garment size matching. Customers face a challenge when they have to choose the right size of garment without try it on
 when buying these items both in
store and, especially, in online clothes shops \citep{Ding}.  Although users can base their decision on their previous experience (their virtual closet), children are constantly growing, so this not suitable strategy \citep{sindicich}. Not only that, but  each company also has its own sizing, and what is more, this can change over time \citep{nancy}. As a consequence, size matching in children should be based on their current form.

There is usually a sizing chart that corresponds to several anthropometric measurements, together with their ranges to show the size assignation. Nevertheless, customer's measurements can lead to different size assignations depending on which measurements are considered. Therefore,  customers cannot know which size will fit them best \citep{labat}. As a result, size fitting problems lead to  a  high percentage
of returns, which represents one of the main costs of this sales channel for distributors and manufacturers. The return rates of some e-commerce businesses are between 20 and 50\% \citep{showroom}. This also decreases customer satisfaction \citep{otieno} and the likelihood that the customer will buy again.  Moreover, concern about poor fit is the main obstacle to purchasing clothes online.

To address the child garment size matching problem, a fit assessment study was carried out
by the Biomechanics Institute of
Valencia. A sample of Spanish children aged between 3 and 12 years were scanned using the Vitus Smart 3D body scanner from Human
Solutions. This has  a non-intrusive laser system
consisting of four columns that house the optic system, which moves from head
to feet in ten seconds, performing a sweep of the body. The body shape of each child in our data set was represented by $3075$ 3D landmarks. Although a 3D body scanner is not usually available for  customers,  nowadays customers
can obtain their detailed body shapes using their own digital cameras or other measuring technologies \citep{Cordier,ballester}. Recently, 3D bodies have been reconstructed from images captured with a smartphone or tablet in \cite{Ballester16}.
 Furthermore, a subsample of these children tested different garments of different sizes, and their fit was assessed by an expert. This expert labeled the fit as 2 (correct), 1 (if the garment was small for the child) or 3 (if the garment was large for the child) in an ordered factor called Size-fit.

Therefore, finding the garment size that best fits  the user is a statistical classification problem \citep{meunier}. In this problem, the children's body shapes represented by landmarks are predictive variables in a Riemmanian manifold. The proposed method has been applied to the aforementioned  database of children with excellent results.

To our knowledge, the only  previous reference about the child garment size matching problem is \cite{Pierolaetal16}. However, they used multivariate features, not the complete information about the child's form. In particular, they used  the differences between the reference mannequin of the evaluated size and the child for several anthropometric measurements. If the reference mannequin is not available, that methodology cannot be used.

As regards other works that also use variables in a Riemmanian manifold in the context of the apparel industry, in \cite{Vinueetal16} women's body shapes represented by landmarks were used to define a new sizing system by adapting clustering algorithms to the shape space. Unlike our supervised learning problem, they dealt with an unsupervised learning problem. 
Another unsupervised learning problem is faced in \cite{EpiIbaSim17}, where archetypal shapes of children are discovered.


The R language \citep{R17} was employed in our implementations.
We used the \texttt{shapes} package by Ian Dryden \citep{shapesR}.
This is a very powerful and complete package for the statistical analysis
of shapes.

The article is organized as follows: Section~\ref{Sec.PLM} reviews the partially linear models on Riemmanian manifolds, which are generalized in Section~\ref{Sec.GPLM} to Riemmanian manifolds. Algorithms for their estimation are also given. Their R (\cite{R17}) code is available at \url{www3.uji.es/~epifanio/RESEARCH/partly.rar}.
Section~\ref{shape_space} describes the basic concepts of statistical shape analysis, and explains how to estimate generalized partially linear models on the Kendall's 3D  Shape Space.
 The use of the logistic partly linear model on the Kendall's 3D Shape Space is illustrated by a well-known data set in Section~\ref{simul_study}, while the ordered partially linear model is applied to solve the child garment size matching problem in Section~\ref{our_appl}.
Finally, conclusions are discussed in Section~\ref{conclusions}.


\section{Partially linear models on Riemannian manifolds } \label{Sec.PLM}
Partially linear models (PLM) \cite{Engleetal86} are regression models in which the response
depends on some covariates linearly but on other covariates nonparametrically.
PLMs generalize standard linear regression techniques and are special
cases of additive models \citep{Hastieetal90, Stone85}, which makes it easier to interpret  the
effect of each variable.

Partially linear models when one of the predictive variables takes values on a Riemannian manifold were introduced in \cite{manteigaetal12}. In this work, they consider a sample
$\{(y_i,x_i^t,s_i)\}_{i \in 1,\cdots,n}$,
where the response variable, $Y$, is a real valued scalar variable,  $x_i^t$ is a real valued $p$-dimensional vector and $s_i$ is a point of a Riemannian manifold, $M$, of dimension $d$.
They assume the partially linear model:

\begin{equation} \label{manteiga1}
y_i=x_i^t \beta + g(s_i) + \epsilon_i,  \ \ i=1, \ldots, n,
\end{equation}
and
\begin{equation} \label{manteiga2}
x_{ij}=\phi_j(s_i) + \eta_{ij},  \ \ i=1, \ldots, n, \ \ j=1, \ldots, p.
\end{equation}
with $g(s)=\phi_0(s)-\phi^t(s) \beta$, where $\phi_0(s)=E(Y \mid s)$ and $\phi(s)=(\phi_1(s), \ldots, \phi_p(s))$; and with independent errors $\epsilon_i$ and  $\eta_{ij}$. Therefore, $\beta$, $\phi_0(s)$ and  $\phi(s)$ are the parameters to estimate.

Manteiga et al. \cite{manteigaetal12} suggest estimating $\phi_0(s)$ and $\phi(s)$ using non-parametric kernel-type estimators on Riemannian manifolds (see section \ref{non_parametric_manifolds}) and then  estimating the parameter $\beta$ considering the least-squares estimator obtained by minimizing:
$$
\hat{\beta}= \text{arg} \min_{\beta} \sum_{i=1}^n \left[(y_i-\hat{\phi}_0(s_i))-(x_i-\hat{\phi}(s_i))^t \beta \right]^2.
$$
Finally, $\hat{g}(s)=\hat{\phi}_0(s)-\hat{\phi}^t (s)\hat{\beta}$.

\subsection{Non-parametric estimators on Riemannian manifolds} \label{non_parametric_manifolds}

Let $\{(x_1, s_1),\ldots,(x_n, s_n)\}$ be iid random vectors that take values on $\mathbb{R}\times M$.
Due to the curvature of $M$, kernel-type estimators of  $\phi(s)=E(x \mid s)$ must be adapted to this space.

Pelletier \cite{pelletier2006non} proposes the following non parametric estimator:
\begin{equation} \label{kernel}
\hat{\phi}(s)=\frac{\sum_{i=1}^n x_i \theta_{s}(s_i)^{-1} K_{h_n}(\rho(s-s_i))}{\sum_{i=1}^n \theta_{s}(s_i)^{-1} K_{h_n}(\rho(s-s_i))},
\end{equation}
where $ \theta_{s}(s_i)$ is the volume density function of $M$; $\rho$ is the Riemannian distance on $M$ and $K_{h_n}$ is a univariate kernel function with bandwidth $h_n$ with $\lim_{n\rightarrow\infty}h_n=0$ and $h_n<i_M$,  $i_M$ being the injectivity radius of $M$. In \cite{pelletier2006non} we can find some good properties of this estimator.

\section{Generalized Partially Linear Model on Riemannian manifolds}\label{Sec.GPLM}

As stated in the introduction, the aim of this paper is to generalize the partially linear model on Riemannian manifolds to the generalized
linear model introduced by \cite{NelderWedderburn} and to apply it to the particular and important case of the  Kendall's 3D shape space.


Although our proposal can be extended to  generalized linear models in general, we will focus on two particular important models that we will use in our applications: logistic and ordered logistic  models.

\subsection{Logistic Partially Linear Model on Riemannian manifolds}
Let $\{(y_1,x_1,s_1),\ldots,(y_n,x_n, s_n)\}$ be a set,
where $y_i$ are binary variables, $x_i$ real valued $p$-dimensional vectors and $s_i$ are points in $M$, a Riemannian manifold of dimension $d$.

Defining $p_i=E(y_i \mid x_i,s_i)$, we can assume the logistic partially linear model:

\begin{equation} \label{pglm1}
logit(p_i)=x_i^t \beta + g(s_i)   \ \ i=1, \ldots, n,
\end{equation}
and
\begin{equation} \label{pglm2}
x_{ij}=\phi_j(s_i) + \eta_{ij},  \ \ i=1, \ldots, n, \ \ j=1, \ldots, p,
\end{equation}
with $g(s)=\phi_0(s)-\phi^t(s) \beta$; $\phi(s)=(\phi_1(s), \ldots, \phi_p(s))$; and where $\beta$, $\phi_0(s)$ and  $\phi(s)$ are the parameters to estimate.

As in \cite{manteigaetal12}, because $s$ is in a Riemannian manifold, the estimation of $\phi_j(s)$ $j=1,...,p$ must be obtained using equation \ref{kernel}.
In the next section the expression of this estimator will be given for the particular and difficult case of Kendall's 3D shape space.

The algorithm that we propose follows the ideas of additive generalized linear models \citep{Hastieetal90,HTF09}: a partially linear model is applied to the adjusted dependent variable  at each step of the  iteratively reweighted
least squares (IRLS) algorithm. It is as follows (the superindex ($j$) indicates the estimation in the $j$-th iteration):

\begin{algorithm} \label{algorithm_logistic}

\begin{itemize}
\item[] $x=(x_i)_{i=1,..,n}$
\item[] Initialize $\beta^{(0)}$, $\phi_0^{(0)}=(\phi_0^{(0)}(s_1),..., \phi_0^{(0)}(s_n))^t$, $e^{(0)}$, $j=0$
\item[] Calculate $\phi=\left(
                          \begin{array}{c}
                            \phi(s_1) \\
                            \vdots \\
                            \phi(s_n) \\
                          \end{array}
                        \right)$
using equation \ref{kernel}
\item[] While ($e^{(j)}>$ specified threshold) and ($j<$ specified maximum number of steps) do
  \begin{itemize}
  \item[] Calculate $g^{(j)}=\left(
                          \begin{array}{c}
                            g^{(j)}(s_1) \\
                            \vdots \\
                            g^{(j)}(s_n) \\
                          \end{array}
                        \right)= \phi_0^{(j)}- \phi \beta^{(j)}$
  \item[] For $i=1,\ldots,n$
    \begin{itemize}
    \item[] $p_i =logit^{-1}(x_i^t \beta^{(j)} + g^{(j)}(s_i) )$
    \item[] Construct the working target variable $$z_i=x_i^t \beta^{(j)}+\phi_0^{(j)}(s_i)-\beta^{(j)}\phi^{(j)}(s_i)+\frac{y_i-p_i}{p_i(1-p_i)}$$
    \item[] $w_i=p_i(1-p_i)$
    \end{itemize}
  \item[] End for
  \item[] Apply partly linear model to the targets  $z=(z_i)_{i=1,\ldots,n}$ with weight matrix $W=Diag((w_i)_{i=1,..,n})$:
      \begin{itemize}
      \item[] Calculate $\phi_0^{(j+1)}=\left(
                                            \begin{array}{c}
                                              \phi_0^{(j+1)}(s_1) \\
                                              \vdots \\
                                              \phi_0^{(j+1)}(s_n) \\
                                            \end{array}
                                          \right)$ using equation \ref{kernel} replacing $x_i$ by $z_i$

       \item[] $\beta^{(j+1)}=((x-\phi)^tW(x-\phi))^{-1}((x-\phi)^t W(z-\phi_0^{(j+1)})) $
       \end{itemize}
 \item[] $e^{(j)}=\|\beta^{(j+1)}-\beta^{(j)}\|/\|\beta^{(j+1)}\|$
 \item[] $j=j+1$
 \item[] End while
 \end{itemize}
\end{itemize}
\end{algorithm}

With respect to the initializations, $\beta^{(0)}=0$ and $\phi_0^{(0)}=(-0.5,..., -0.5)^t$, which would correspond to equiprobability, provided good results in our experiments.

\subsection{Ordered Partially Linear Model on Riemannian manifolds}

The above algorithm can be modified to model an ordinal response, in particular we will assume the cumulative logistic model or proportional odds model \cite{mccullagh1980regression,agresti2010analysis}.

Let $\{(y_1,x_1,s_1),\ldots,(y_n,x_n, s_n)\}$ be a set,
where $y_i$ are response variables, $x_i$ real valued $p$-dimensional vectors and $s_i$ are points in $M$, a Riemannian manifold of dimension $d$. Suppose that the response variable $y$ has $K$ ordered categories and $\pi_k(x,s)=P(y\leq k \mid x,s)$, $k=1,\ldots,K-1$. Assume:
\begin{equation} \label{pglm1ordered}
logit(\pi_k(x_i,s_i)))=x_i^t \beta + g_k(s_i)   \ \ i=1, \ldots, n, \ \ k=1,\ldots,K-1
\end{equation}
and
\begin{equation} \label{pglm2ordered}
x_{ij}=\phi_j(s_i) + \eta_{ij},  \ \ i=1, \ldots, n, \ \ j=1, \ldots, p.
\end{equation}

Following   \cite{walker1967estimation,mccullagh1980regression} and \cite{thompson1981composite}, we treat the cumulative link
model as a multivariate generalized linear model \cite{fahrmeir2013multivariate} defining $Y_i=(Y_{i1},...,Y_{i(K-1)})$ as $Y_{ik}=1$ if $y_i\leq k$ and otherwise as zero.
In the multivariate case one merely has to substitute vectors and matrices for the multivariate versions.

We define the total design matrix
$\tilde{x}_i=\left(\begin{array}{c}
x_{i}^t\\
 \vdots\\
 x_{i}^t
\end{array} \right)_{(K-1) \times p}$;
$p_i=\left(\begin{array}{c} \pi_1(x_i,s_i) \\ \vdots \\ \pi_{K-1}(x_i,s_i)\end{array} \right)_{(K-1) \times 1}$
and $\tilde{\phi}(s_i)=\left(\begin{array}{c}
\phi(s_{i})^t\\
 \vdots\\
 \phi(s_{i})^t
\end{array} \right)_{(K-1) \times p}$

Let $D_i$ be the derivative of the link function and the weight matrix $W_i=D_i\sigma_i^{-1}D_i^t$, with $\sigma_i=cov(Y_i)$ (which can be considered an approximation of the inverse of the covariance matrix of the transformed response).

Algorithm \ref{algorithm_logistic} is modified as follows:

\begin{algorithm} \label{algorithm_ordinal}
\begin{itemize}
\item[] $\tilde{x}=(\tilde{x}_i)_{i=1,..,n}$
\item[] Initialize $\beta^{(0)}$, $\phi_0^{(0)}=(\phi_0^{(0)}(s_1),...,\phi_0^{(0)}(s_n))^t$, $e^{(0)}$, $j=0$
\item[] Calculate $\phi=\left(
                          \begin{array}{c}
                            \phi(s_1) \\
                            \vdots \\
                            \phi(s_n) \\
                          \end{array}
                        \right)$
 using equation \ref{kernel}
\item[] Calculate $\tilde{\phi}(s_i)$ $i=1, \ldots,n$
\item[] While ($e^{(j)}>$ specified threshold) and ($j<$ specified maximum number of steps) do
  \begin{itemize}
  \item[] Calculate $g^{(j)}=\left(
                          \begin{array}{c}
                            g^{(j)}(s_1) \\
                            \vdots \\
                            g^{(j)}(s_n) \\
                          \end{array}
                        \right)= \phi_0^{(j)}- \phi \beta^{(j)}$
  \item[] For $i=1,\ldots,n$
       \begin{itemize}
       \item[] $p_i^{(j)}=logit^{-1}(\tilde{x}_i^t \beta^{(j)} + g^{(j)}(s_i) )$
       \item[] Calculate $D^{-1}_i(p_i^{(j)})$ and $W_i(p_i^{(j)})$
       \item[] Construct the working target variable $$z_i=\tilde{x_i}^t \beta^{(j)}+\phi_0^{(j)}(s_i)-\beta^{(j)}\tilde{\phi}^{(j)}(s_i)+(D_i^{-1})^t(y_i-p_i^{(j)})$$
       \end{itemize}
  \item[] Apply partly linear model to the targets $z=(z_i)_{i=1,\ldots,n}$ with weight matrix $W=Diag((W_i)_{i=1,..,n})$ to $z=(z_i)_{i=1,\ldots,n}$:
      \begin{itemize}

        \item[] Calculate $\phi_0^{(j+1)}=\left(
                                            \begin{array}{c}
                                              \phi_0^{(j+1)}(s_1) \\
                                              \vdots \\
                                              \phi_0^{(j+1)}(s_n) \\
                                            \end{array}
                                          \right)$ using equation \ref{kernel} replacing $x_i$ by $z_i$
         \item[] $\beta^{(j+1)}=((\tilde{x}-\tilde{\phi})^tW(\tilde{x}-\tilde{\phi}))^{-1}((\tilde{x}-\tilde{\phi})^t W(z-\phi_0^{(j+1)}))$
      \end{itemize}
 \item[] $e^{(j)}=\|\beta^{(j+1)}-\beta^{(j)}\|/\|\beta^{(j+1)}\|$
 \item[] $j=j+1$
 \item[] End while
 \end{itemize}
\end{itemize}
\end{algorithm}

In the particular case of the ordinal model with three categories of our application:
$$ W_i(p_i)=\frac{1}{\pi_2(x_i,s_i)}\left(\begin{array}{cc} \frac{1-\pi_3(x_i,s_i)}{\pi_1(x_i,s_i)} & -1 \\
 -1& \frac{1-\pi_1(x_i,s_i)}{\pi_3(x_i,s_i)}
 \end{array}
 \right),$$

 $$D^{-1}_i(p_i)=\left(\begin{array}{cc} \pi_1(x_i,s_i)(1-\pi_1(x_i,s_i)) & 0\\
 0 & \pi_3(x_i,s_i)(1-\pi_3(x_i,s_i))
 \end{array}
 \right).$$

\section{Kendall's 3D  Shape Space}\label{shape_space}

In the previous section the logistic and ordered logistic  partially linear models were given for a general Riemannian manifold. In this section we give the expressions that we need in order to apply them in the particular and important case of the Kendall 's 3D Shape Space.
This manifold has a complicated structure and the calculus of the expressions needed in equation \ref{kernel}  is not trivial.

We begin by introducing some basic concepts, a complete introduction to which can be found in \cite{DrydenMardia16}.

In our approach to shape analysis each object is identified by a set of landmarks, i.e. a set of points in the Euclidean space $\mathbb{R}^{m}$ that identifies each object and match between and within populations.

\begin{definition}
A configuration matrix $X$ is a $k \times m$ matrix with the Cartesian coordinates of the $k$ landmarks of an object.
\end{definition}

The shape of an object is all the geometric information that remains invariant with translations, rotations and changes of scale. Thus:
\begin{definition}
The shape space  $\Sigma^k_m$ is the set of equivalence classes $T_X$ of $k \times m$ configuration matrices $X \in \mathbb{R}^{k \times m}$ under the action of
Euclidean similarity transformations.
\end{definition}

As  mentioned above,
the shape space $\Sigma^k_m$  admits a Riemannian manifold structure.
The complexity of this Riemannian structure depends on $k$ and $m$. For example, $\Sigma^k_2$ is the well-known complex projective space. For $m > 2$, which is the case of our application, they are not familiar
spaces and may have singularities.

A representative of each equivalence class $T_X$ can be obtained by
removing the similarity transformations one at a time. There are
different ways to do that.

Let $X$ be a configuration matrix.
One way to remove
the location effect consists of multiplying it by the Helmert submatrix,
$H$, i.~e.,~$X_H=HX$.

To filter scale, we can divide $X_H$ by the centroid size,
which is given by

\begin{eqnarray}\label{tamañocentroide}
CS(X)=\Vert X_H \Vert=\Vert HX \Vert =\sqrt{\mbox{trace}((HX)^t (HX))}=\Vert X \Vert,
\end{eqnarray}
$\Vert\cdot\Vert$ is the Frobenius norm.

So,
\begin{eqnarray}\label{preshapes}
Z_X=\frac{X_H}{\|X_H\|}
\end{eqnarray}
 is called the pre-shape of the configuration
matrix $X$ because all information about location and scale is removed,
but rotation information remains.

\begin{definition}
The pre-shape space  $S^k_m$ is  the set of all possible pre-shapes.
\end{definition}
 $S^k_m$ is a hypersphere of unit radius in $\re^{m(k-1)}$
 (a Riemannian manifold that is widely studied and
known). $\Sigma^k_m$ is the quotient space of $S^k_m$ under rotations.

As a result, a shape $S_X$ is an orbit associated
with the action of the rotation group $SO(m)$ on the pre-shape.

From now on, in order to simplify the notation, we will use $S_X$ to denote both, a configuration matrix and its shape, provided that it is understood by context.

For $m=2$, this quotient space is isometric with the complex projective
space $\mathbb{CP}_{k-2}$, a familiar Riemannian manifold without singularities.
For $m>2$, $\Sigma^k_m$ is not a familiar
space, and it has singularities.
The singularities are  shapes whose preshapes have rank $m-2$ or less. With real world applications we can usually assume that our data are almost certainly in the non-singular part of the shape space and, fortunately,
the Riemannian structure of the non-singular part of $\Sigma^k_m$ can be obtained taking into account that
$\pi: S^k_m\to \Sigma_m^k$ is a Riemannian submersion \citep{Kendalletal09}, for any $\pi(Z)\in \Sigma_m^k$ the tangent space $T_{\pi(Z)}\Sigma_m^k$ can be identified with the horizontal space $\mathcal{H}_Z$ of $T_ZS^k_m$.


\subsection{Riemannian distance}

The induced Riemannian distance in the shape space is given by the Procrustes distance defined as follows.
\begin{definition} \label{distancia_Procrustes}
Given  two configuration matrices $X_1, X_2$, the  Procrustes distance of its corresponding shapes,
 $\rho(S_{X_{1}},S_{X_{2}})$, is the closest great-circle distance  between $Z_1$ and $Z_2$ on the pre-shape
hypersphere $S^k_m$, where $Z_j=\frac{HX_j}{\|HX_j\|}, j=1,2$. The minimization is carried out over rotations.
\end{definition}

The solution for this optimization problem is:

$$\rho(S_{X_{1}},S_{X_{2}})=\arcsin\left(\sqrt{1-(\sum_{i=1}^m \lambda_i)^2}\right),$$
where $\lambda_1 \geq \lambda_2 \geq \ldots \lambda_{m-1}\geq \mid \lambda_m \mid $ are the square roots of the eigenvalues of $Z_1^TZ_2Z_2^TZ_1$, and the smallest value $\lambda_m$ is the negative square root if and only if $det(Z_1^TZ_2)<0$ \cite{DrydenMardia16}.

Note that the range of this distance is $[0,\pi/2]$.


\subsection{Volume density function}

The volume density function can be obtained taking into account that the mapping that assigns  the corresponding element on the shape space to each preshape $Y$:
\begin{eqnarray*}
&&\pi:S^k_m \rightarrow  \Sigma^k_m \\
&& Z \mapsto T=\pi(Z),
\end{eqnarray*}
is a Riemannian submersion. Then the volume density function is (see  \ref{Appendix})

\begin{equation} \label{vdf}
\theta_{\pi(Z_1)}\left(\pi(Z_2)\right)=\left(\frac{\sin{\rho(\pi(Z_1),\pi(Z_2))}}{\rho(\pi(Z_1),\pi(Z_2)))}\right)^{m(k-1)-2-\frac{m(m-1)}{2}}
\end{equation}
when $\pi(Z_1)\neq \pi(Z_2)$, and $\theta_{\pi(Z_1)}(\pi(Z_2))=1$ if $\pi(Z_1)= \pi(Z_2)$. We must stress here the importance of the volume density function  in the case of a large number of landmarks $k$, because in the limit $k\to\infty$, the definition formula (\ref{vdf}) becomes
$$
\theta_{\pi(Z_1)}\left(\pi(Z_2)\right)=\left\{\begin{array}{ccc}
1&{\rm if}&\pi(Z_1)= \pi(Z_2)\\
0&{\rm if}&\pi(Z_1)\neq \pi(Z_2)
\end{array}\right.
$$

\subsection{Generalized Partially Linear Models on the  Kendall's 3D Shape Space}

Once the necessary concepts have been introduced, we turn to the algorithm to fit a generalized partially linear model on  Kendall's Shape Space. We focus on the particular case of the ordered partially linear model (for the logistic partially linear model, it is analogous but instead we apply  the algorithm \ref{algorithm_logistic}).

\begin{algorithm} \label{algorithm_ordered_Kendall}
Given a sample
$\{(y_1,X_1,x_1),\ldots,(y_n, X_n,x_n)\}$,
where $y_i$ is a realization of an ordered variable with $K$ categories, $X_i$ are configuration matrices and $x_i$ real valued $p$-dimensional vectors.

\begin{itemize}

\item[(i)] Compute the pre-shapes of $X_1,\ldots,X_n \rightarrow Z_1,\ldots,Z_n$ using equations \ref{tamañocentroide} and \ref{preshapes}.
\item[(ii)] Apply algorithm \ref{algorithm_ordinal} with $s_i=\pi(Z_i)$ i.e. $\theta_i(s_j)=\theta_{\pi(Z_i)}\left(\pi(Z_j)\right)$ in equation \ref{vdf} and $\rho(\pi(Z_i),\pi(Z_j))$ is given by definition \ref{distancia_Procrustes}.

\end{itemize}

\end{algorithm}

\section{Application to anatomical data} \label{simul_study}

In an investigation into sex differences in the crania of a species of macaque, random samples of 9 male and 9 female skulls were
obtained by Paul O'Higgins (Hull-York Medical School) \citep{DrydenMardia16,DrydenMardia93}. A subset of seven anatomical landmarks was located on each cranium and the three-dimensional
(3D) coordinates of each point were recorded. The aim of the study was to assess whether there were any size and shape differences between
sexes. The data are available in the R \texttt{shapes} package \citep{shapesR}.

Fig. \ref{landmarks.macaque} (a) shows the distribution of all the landmarks of the 18 subjects.
From the configuration matrices $\{X_i \}_{i=1,\cdots,18} \in M_{7 \times 3}, $ with the coordinates of the landmarks of the 18 macaques, the full Procrustes mean shapes are computed for males and females separately (see Fig. \ref{landmarks.macaque} (b)), and their preshapes, $\{s_i\}_{i=1,\ldots,18}$, (eq. \ref{preshapes}) and sizes $\{x_i\}_{i=1,\ldots,18}$ (eq. \ref{tamañocentroide}), are computed (see Fig. \ref{fig.preshapes} and Table \ref{table.sizes}).

\begin{figure}[htb]
\begin{center}
\begin{tabular}{cc}
\includegraphics[width=7cm]{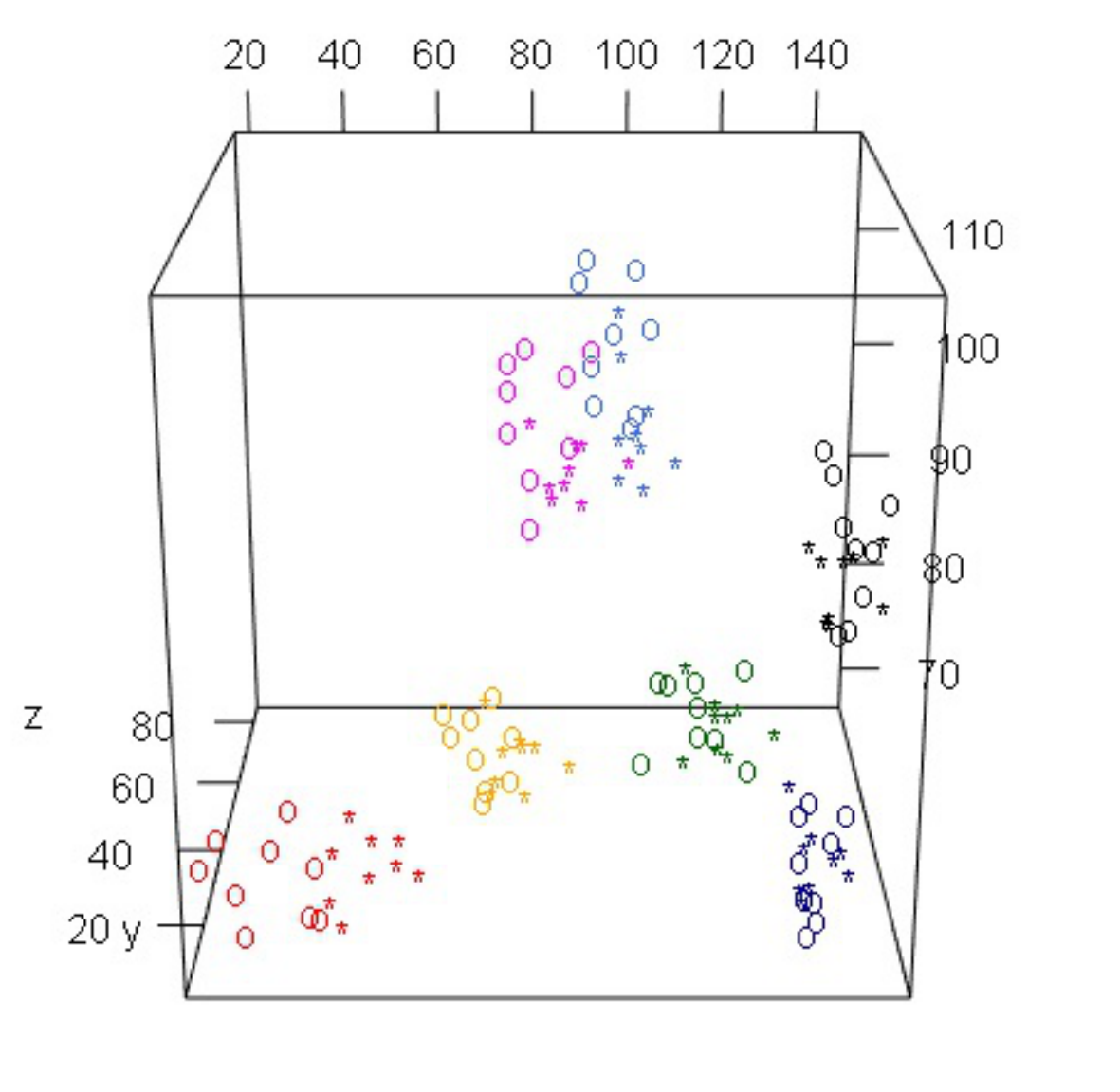} & \includegraphics[width=7cm]{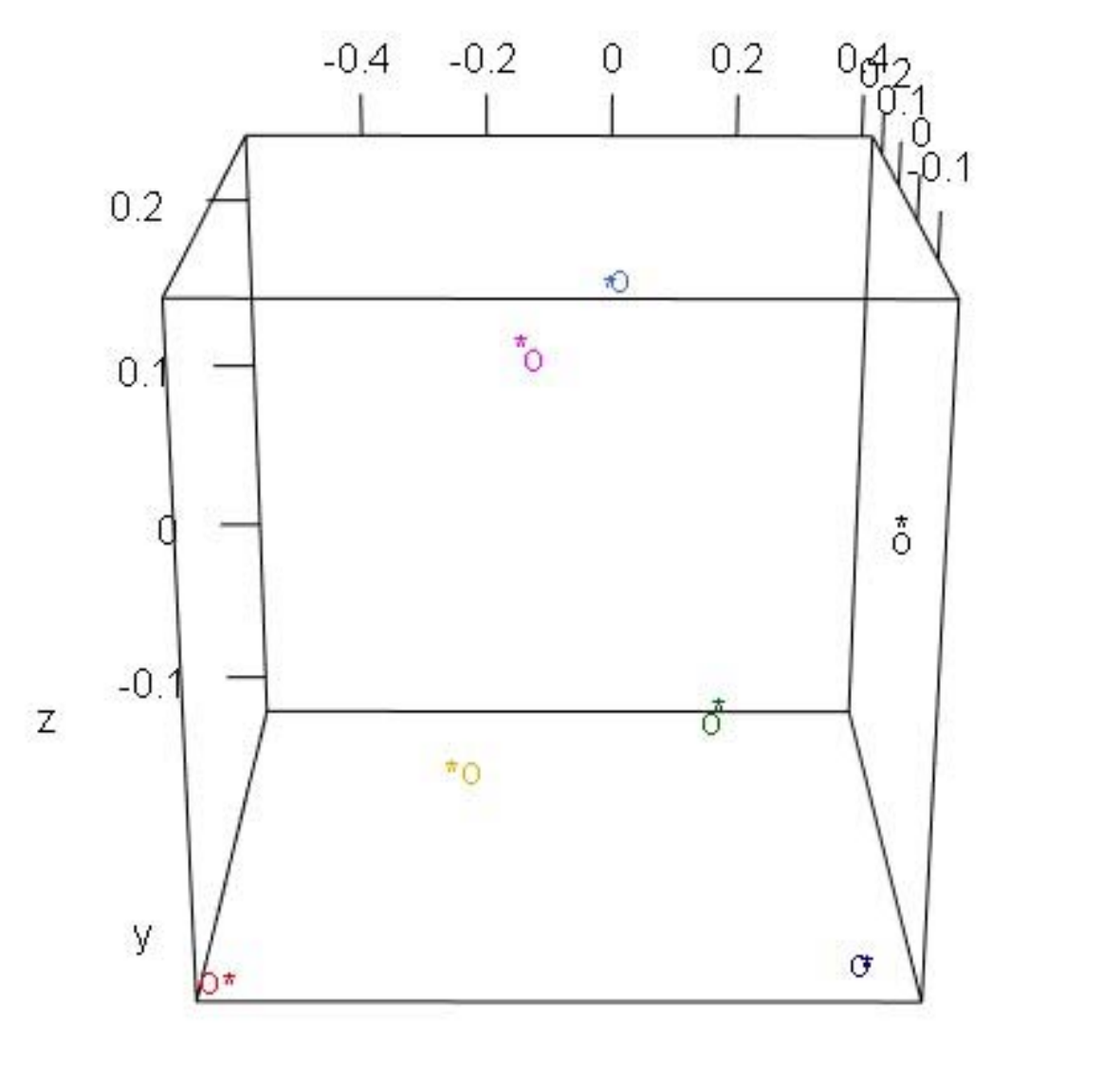}\\
(a) & (b)\\
\end{tabular}
\caption{(a) Landmarks corresponding to the 18 configurations. (b) Landmarks corresponding to the mean shapes of males and females. The different colors represent the different landmarks; the symbol $'o'$ is used for the landmarks of males and the symbol $'*'$ is used for  females. \label{landmarks.macaque}}
\end{center}
\end{figure}

\begin{table}[htb]
\begin{tabular}{cccccccccc}
 \hline
      & m1& m2 & m3 & m4 & m5 & m6 & m7 & m8 & m9 \\
 \hline
 size ($x_i$) &113.9 &104.1 &107.9 &117.6 &113.8 &120.7 &107.4 &117.1 &109.5\\
 \hline
      & f1& f2 & f3 & f4 & f5 & f6 & f7 & f8 & f9 \\
 \hline
 size ($x_i$) &97.1&  87.8& 102.6& 106.6&  94.5 &101.7&  94.4&  97.8& 100.7\\
 \hline
\end{tabular}
 \caption{Sizes of the 18 crania, computed from the landmark configurations. m1, m2,..., m9 denote  the 9 males and f1, f2,..., f9  the 9 females.} \label{table.sizes}
 \end{table}

\begin{figure}[htb]
\begin{center}
\includegraphics[width=7cm]{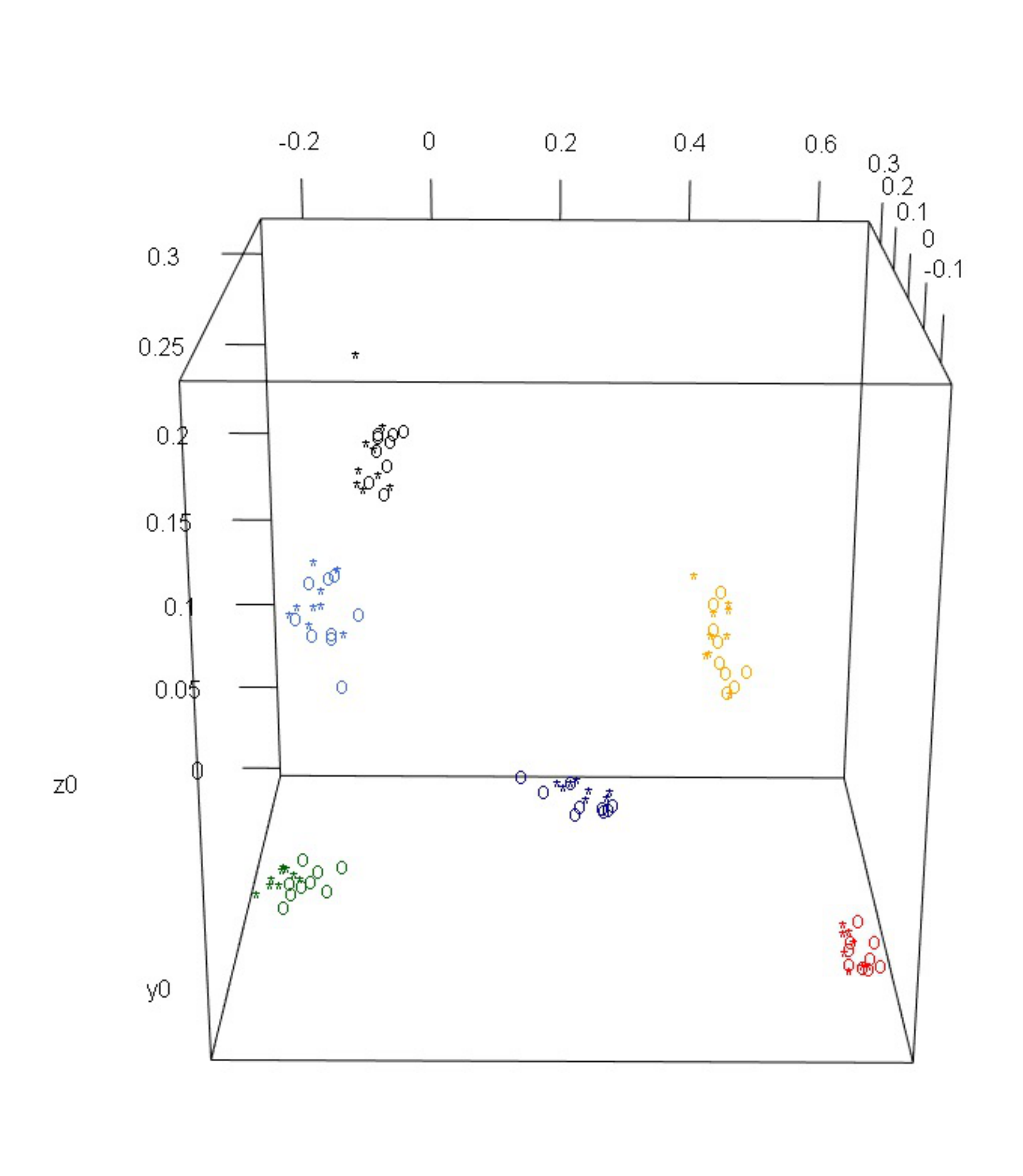}\\
\caption{Pre-shapes of the 18 crania. \label{fig.preshapes}}
\end{center}
\end{figure}

If we define $Y_i=1$ if the $i-th$ cranium belongs to a female and $Y_i=0$ if it belongs to a male, then we can model:
\begin{displaymath}
logit(Pr(Y_i=1))= x_i \beta_1 + g(s_i)+\epsilon_i,   \ \ i=1, \ldots, 18,
\end{displaymath}
and  algorithm  \ref{algorithm_logistic} can be used to fit a logistic partially linear model.

In the smoothing procedure, we considered a Gaussian kernel and  the bandwidth  parameter $h$ was fixed as $h=\pi/100$  by using a  leave-one-out cross-validation (CV) procedure. With this value, a $0\%$ CV error was obtained. With \textit{threshold}=0.0002, the algorithm stops at $7$ iterations.
 (see Table \ref{table.CV}).

\begin{table}
\begin{center}
\begin{tabular}{ccccc}
\hline
$h$                                     & $\pi/100$& $\pi/50$& $\pi/25$ & $\pi/10$\\
\hline
$CV$ ($\%$ of correct classifications)  &      $100 \%$    &     $88.89\%$    &        $88.89\%$    &     $88.89\%$       \\
\hline
\end{tabular}
\end{center}
\caption{Results of the CV analysis to choose the value of the bandwidth parameter for the crania problem.} \label{table.CV}
\end{table}

The estimation procedure provides a $\hat{\beta_1}=-6.02$, signaling that the probability of being  female decreases as  skull size increases, and the values presented in Table
\ref{table.g} were obtained for the nonparametric part of the model.

\begin{table}[htb]
\begin{tabular}{cccccccccc}
 \hline
      & m1& m2 & m3 & m4 & m5 & m6 & m7 & m8 & m9 \\
 \hline
 $\hat{g}(s_i)$ & 655.1 &616.4 &638.8& 682.5& 675.9& 675.4& 635.3& 659.3& 647.4 \\
 \hline
      & f1& f2 & f3 & f4 & f5 & f6 & f7 & f8 & f9 \\
 \hline
 $\hat{g}(s_i)$ &619.8 & 565.5 &628.2 &647.3 &  603& 620.6 & 601.1& 619.3& 632.8\\
 \hline
\end{tabular}
 \caption{Estimation of the nonparametric part of the logistic PLM for each macaque in the data set.} \label{table.g}
 \end{table}

\section{Application to children's body shapes}\label{our_appl}
The aim of this section is to show how the aforementioned algorithm can be used to  predict the goodness of  fit of a given garment size, i.e. small ($Y_i=1$), good fit ($Y_i=2$) or large ($Y_i=3$), as a function of the garment size, the size of the child and his/her shape.

There are multiple ways to choose the most suitable size in a potential online sales application, and all of them depend on the manufacturer.

A randomly selected sample of $739$ Spanish children aged  $3$ to $12$ was scanned using a Vitus Smart 3D body scanner from Human Solutions.
The children were asked to wear a standard white garment in order to standardize the measurements. Several cameras capture images and associated software
provided by the scanner manufacturers detects the brightest points and
uses them to create a triangulation that provides information about the 3D
spatial location of $3075$  points on the body's surface. \\
The 3D scan data are processed to create of posture-harmonized homologous models to obtain a database of individual 3D homologous avatars with one-to-one anatomical  vertex correspondence between them.
As a result, each child's  body shape is represented by $3075$ 3D landmarks.
Because the children's head, hands, legs and feet are not involved in the
shirt size selection, these parts were discarded from the scans, and a
total of 1423 3D landmarks per child were considered, i.e. each child's the body shape  was represented by a $1423 \times 3$ configuration matrix.
Two of them are shown in Figure \ref{fig.landmarks}.

\begin{figure}[htb]
\begin{center}
\begin{tabular}{cc}
\includegraphics[width=7cm]{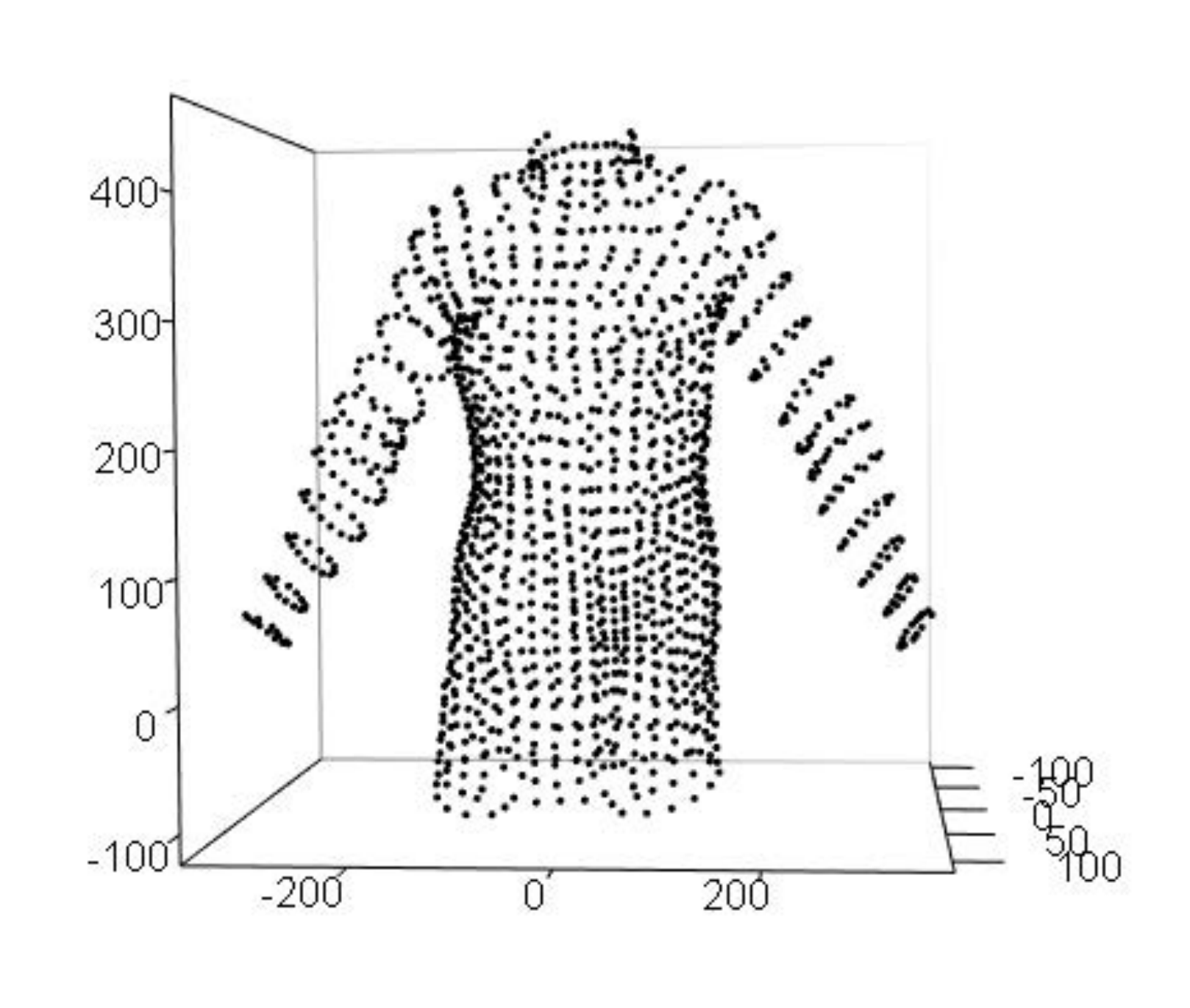} & \includegraphics[width=7cm]{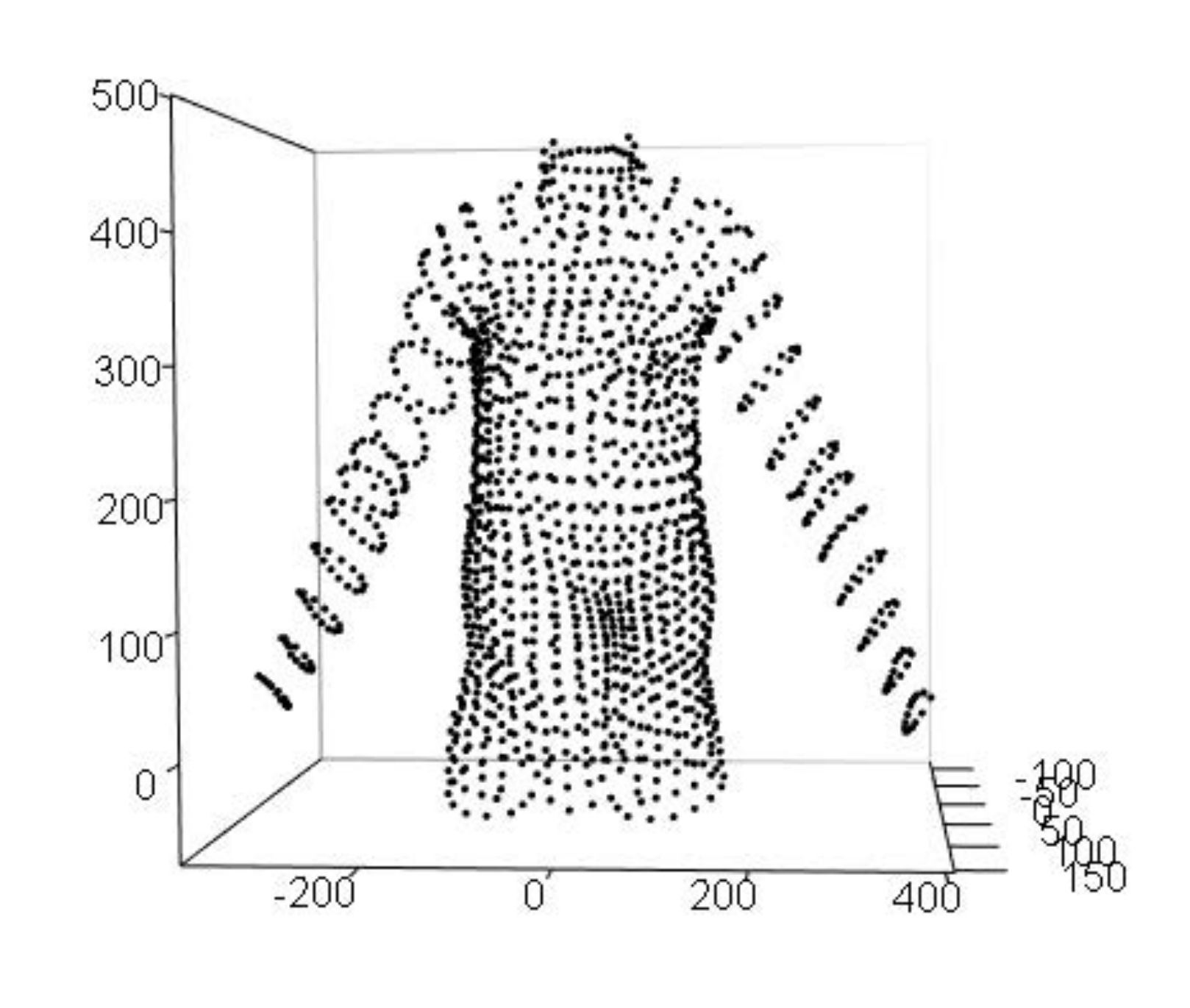}\\
(a) & (b)
\end{tabular}
\caption{3D landmarks of  (a) a girl and (b) a boy in the data set. \label{fig.landmarks}}
\end{center}
\end{figure}

Seventy eight of these children performed an additional fit test. All of them tried on the same shirt model in different sizes: the supposedly correct size, the size above and the size below.  Then, an expert in clothing and design qualitatively evaluated  the  fit in each case (as small, correct fit or large). Due to lack of cooperation by some of the children, not all the children tried on all the three sizes, but only two sizes or even one. In 24 cases, only two sizes were evaluated, and 9 children tested just one shirt size. There were 7 shirt sizes available, supposedly corresponding to ages 3, 4, 5, 6, 8, 10 and 12.

Two subsamples are considered, the sample consisting of the $37$ boys and that consisting of the $41$ girls in the data base.

Algorithm \ref{algorithm_ordinal} is applied to fit the model:
\begin{displaymath}
logit(P(Y_i\leq k | x_{i1},x_{i2},s_i))=\beta_1 x_{i1} + \beta_2 x_{i2} + g_k(s_i)   \ \ i=1, \ldots, n, \ \ k=1,2
\end{displaymath}
and
\begin{displaymath} \label{pglm22}
x_{ij}=\phi_j(s_i) + \eta_{ij},  \ \ i=1, \ldots, n, \ \ j=1,2,
\end{displaymath}
to each subsample,  $s_i$ being the body shape of the $i$-th child,  $x_{i1}$ his/her body size and $x_{i2}$ the size evaluated.

We consider  a Gaussian kernel again and in order to  choose the value of the bandwidth  parameter $h$ and, at the same time, perform a quantitative analysis of the effectiveness of the method, a leave-one-out cross-validation study is conducted. At each step of this study, a child is left out, and  his/her fit predicted for the supposedly correct size, the size above and the size below.  In Table \ref{table.CV2} we can see the percentage of correct classifications in each case.

\begin{table}
\begin{center}
\scalebox{0.8}{
\begin{tabular}{ccccccccccc}
\hline
\multicolumn{2}{c}{$h$}                & $\pi/50$ & $\pi/100$ & $\pi/120$ & $\pi/140$ & $\pi/160$ & $\pi/180$ & $\pi/200$ & $\pi/220$& $\pi/250$ \\
\hline
\multirow{2}{*}{CV (\%) accuracy} &boys     & $52.86$ & $60.00$ & $65.71$ & $64.29 $ & $64.29$ & $68.57$ &  $60.00$&$58.57$ &$52.86$\\
                                  &girls    & $46.57$ & $54.79$ & $71.23$ & $69.86 $ & $71.23$ & $67.12$ &$69.86$& $61.64$& $47.94$\\
\hline
\end{tabular}
}
\end{center}
\caption{Results of the CV analysis for the children's body shape problem.} \label{table.CV2}
\end{table}

The estimation procedure using the full data set provides $\hat{\beta}=(-1.4959, 0.004707)$ with $h=\pi/180$ for boys and $\hat{\beta}=(-1.4158,0.005101)$ with $h=\pi/160$
for girls. With \textit{threshold}=0.0002, the algorithm stops at 783 and 618 iterations respectively.

\subsection{Comparison with other methods}

In \cite{Pierolaetal16}, the authors used ordered logistic regression and random forest methodologies to predict a garment's goodness of  fit from the differences between the measurements of the reference mannequin for the evaluated size and the child's anthropometric
measurements. We could also have  used different children's anthropometric measurements to fit a classical proportional odds model \cite{mccullagh1980regression}. So given the response variable $Y$ with $K=3$ ordered categories, and given $X$ a vector of explicative variables formed by  the garment size to evaluate and the 27 children's anthropometric measurements considered by \cite{Pierolaetal16}, we could have fitted:

\begin{equation}\label{orderedlogistico}
logit[P(Y\leq k\mid x)]=\alpha_k+\beta' x, k=1,2
\end{equation}

Performing a leave-one-out cross-validation study, choosing the
model on each step  by a forward stepwise model selection based on likelihood ratio tests  \cite{paqueteordinalR}, we obtain worse results than those obtained with our methodology. The percentages of correct classifications are now  $61.76\%$ for boys and $66.19\%$ for girls.

On the other hand, as stated in the introduction, the shape space and size-and-shape space are not flat Euclidean spaces, so  classical statistical methods cannot be directly applied to the manifold valued data. However, if the sample has little variability, the problem can be transferred to a tangent space (at the Procrustes mean of these shapes or size-and-shapes, for example) and then standard multivariate procedures can be performed in this space \cite{DrydenMardia16}, such as Principal Component Analysis (PCA). With this approach, in order to reduce the dimensionality of the data set, the first $p$ PC scores, which summarize most of the variability in the tangent plane data, are usually chosen.

The tangent space is defined from a point called pole, so  the distance from the shape to the pole is preserved, i.e. the distance from a point in the manifold to the pole is equal to the Euclidean distance between its projections in
the tangent space. As one moves away from the pole, the Euclidean distances between some pairs of points in the tangent space are smaller than their corresponding shape distances. This distortion becomes larger as one considers points further from it. For this reason, the pole should be taken close to all of the points and the mean of the observed shapes is the best choice \citep{DrydenMardia16}.

So, given the configuration matrices $X_{i} \in M_{1423\times 3}$, the size $s_i$ of each child is obtained and the full Procrustes mean shapes are computed for boys and girls separately. Then, the coordinates of the projection of $X_{i} \in M_{1423\times 3}$ onto the tangent plane defined at their corresponding mean shape are obtained. The first PC scores of these coordinates are calculated and they will be used as covariates in our predictive model. The first PC components that explain $98 \%$ of variability are considered.

So, given the response variable $Y$ with $K=3$ ordered categories and given a vector $X$ with the garment size to evaluate, the child's size and the first PC scores of his/her coordinates in his/her corresponding tangent space, we can fit the model given by Eq. \ref{orderedlogistico}.

Once again, performing a leave-one-out cross-validation study using this model, we obtain worse results than those obtained with our methodology. The percentages of correct classifications are now  $61.43\%$ for boys and $67.12\%$ for girls.

\section{Conclusions }\label{conclusions}
We define GPLMs  on Riemmanian manifolds for the first time. Due the application that we address, our GPLMs have  focused on   Kendall's 3D Shape Space. Although it is an important and common problem in real applications, this problem has not been addressed until now, to the best of our knowledge. We have developed and illustrated the algorithms for estimating the GPLM in two different applications. We have also compared the results with other simpler approximations in the case of the children's garment size matching problem.

Although we have focused on children's shapes in the application, the methodology can also be used to select the right size for adults, men and women. Furthermore, as pointed out in Sect. \ref{introduccion}, the proposed methodologies have great potential in all the fields where statistical shape analysis is used, including biological and medical applications.

Besides opening the door to applications in different fields, other future  work could  focus on other Riemannian manifolds. Moreover, all the work carried out on GPLMs for multivariate data could be extended to the case where variables also take  values on  Riemannian manifolds.

\section*{Acknowledgements}
This paper has been partially supported by the grant
$DPI2013-47279-C2-1-R$ from the Spanish Ministry of Economy and Competitiveness
with FEDER funds and the grant UJI-B2017-13 from Universitat Jaume I. We would also like to
thank the Biomechanics Institute of Valencia for providing us with
the data set.

\appendix

\section{Volume density function in the shape space}\label{Appendix}
In this section we recall the notion of the volume density function. We need to study the volume density function when we work in a curved space. Our first step  is to introduce the definition of the volume density function in its most general sense. This  is attained when the underlying space is a Riemannian manifold $(M,g)$, namely, a smooth manifold $M$ endowed with a metric tensor $g$.  After that, we will particularize it to the explicit formula for the volume density function in the shape space $\Sigma_m^k$, which is the relevant space in this paper. However, since it is easier to work with the pre-shape sphere $S^k_m$ than in shape space $\Sigma_m^k$, our objective will be to make use of the submersion $\pi:S^k_m\to \Sigma^k_m$ to compute  the volume density function in $\Sigma_m^k$ explicitly.

The definition of the volume density function is as follows.
\begin{definition}[see \cite{Henry2009611} for instance]
Let $(M,g)$ be a Riemannian manifold,  let  $s_1\in M$ be a point of $M$, let $T_{s_1}M$ be the tangent space at $s_1$, let $B_r(s_1)$ be an open geodesic ball of radius $r$ centered at $s_1$, let $B_r(0_{s_1})$ be an open ball of radius $r$ centered at $0_{s_1}\in T_{s_1}M$, and let ${\rm inj}(s_1)$ be the injectivity radius at $s_1$, (i.e. the maximum radius $r$ such that the exponential map $\exp:B_r(0_{s_1})\subset T_{s_1}M\to M$ is a diffeomorphism). The \emph{volume density function}  $\theta_{s_1}: B_{{\rm inj}(p)}(p)\to \mathbb{R}_+$ is a function defined for any point $s_2$ of the normal ball  $B_{{\rm inj}(s_1)}(s_1)$ by

\begin{equation}
\theta_{s_1}(s_2)=\frac{\left\vert \det g'\left(\partial_{\overline\psi_i}\vert_w,\, \partial_{\overline\psi_j}\vert_w\right) \right\vert^\frac{1}{2}}{\left\vert \det g''\left(\partial_{\overline\psi_i}\vert_w,\, \partial_{\overline\psi_j}\vert_w\right) \right\vert^\frac{1}{2}}
\end{equation}
where $g'=\exp^*_{s_1}(g)$ is the pullback of $g$ by the exponential map, $g''$ is the canonical metric induced by $g$ in $B_r(0_{s_1})$,  and $(\overline U,\overline \psi)$ is any chart of $B_r(0_{s_1})$ that contains $w= \exp^{-1}(s_2)$.
\end{definition}
In this work, following \cite{Kendalletal09}, we identify a point $Z$ in the pre-shape sphere $S^k_m$ as a matrix. The explicit formula of the volume density function in the shape space $\Sigma_m^k$ is given in the  following theorem.

\begin{theorem}\label{theovolume}
Let $Z_1$ and $Z_2$ be two points of $S^m_k$. Let $\pi: S^m_k\to \Sigma^m_k$ be a Riemannian submersion from $S^m_k$ to $\Sigma^m_k$. Suppose that  ${\rm rank}(Z_1)\geq m-1$ (i.e. $\pi(Z)$ is not a singular point), then the volume density function $  \theta_{\pi(Z_1)}(\pi(Z_2))$ is
\begin{equation*}
  \theta_{\pi(Z_1)}(\pi(Z_2))=\left\{\begin{array}{ccc}
  \left(\frac{\sin\left(\rho(\pi(Z_1),\pi(Z_2))\right)}{\rho(\pi(Z_1),\pi(Z_2)))}\right)^{m(k-1)-2-\frac{m(m-1)}{2}}&\text{ if }& \pi(Z_1)\neq \pi(Z_2)\\
  1& \text{ if }& \pi(Z_1)= \pi(Z_2)
  \end{array}\right.
\end{equation*}
where here $\rho(\pi(Z_1),\pi(Z_2)))$ is the distance in $\Sigma^m_k$ from $\pi(Z_1)$ to $\pi(Z_2)$.
\end{theorem}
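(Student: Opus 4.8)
The plan is to compute the volume density function of $\Sigma^k_m$ directly from its definition, i.e. as the Jacobian of the exponential map $\exp_{\pi(Z_1)}^{\Sigma}$, and to transfer this computation to the unit pre-shape sphere $S^k_m$ by means of the Riemannian submersion $\pi$. Since $\pi$ is a Riemannian submersion, $T_{\pi(Z_1)}\Sigma^k_m$ is isometrically identified with the horizontal space $\mathcal H_{Z_1}\subset T_{Z_1}S^k_m$, and every geodesic of $\Sigma^k_m$ issuing from $\pi(Z_1)$ is the projection of the horizontal geodesic of $S^k_m$ issuing from $Z_1$ with the corresponding (horizontal) initial velocity; equivalently $\exp_{\pi(Z_1)}^{\Sigma} = \pi\circ\exp_{Z_1}^{S}$ on $\mathcal H_{Z_1}$. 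Hence it suffices to differentiate $\pi\circ\exp_{Z_1}^{S}$ along $\mathcal H_{Z_1}$ and to compute the determinant of the resulting linear map relative to orthonormal bases.

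First I would carry out the computation on the sphere. Because $S^k_m$ is the round unit sphere of dimension $m(k-1)-1$, the geodesic with unit initial velocity $u_0\in\mathcal H_{Z_1}$ is the great circle $t\mapsto \cos t\, Z_1+\sin t\, u_0$, and its length to $Z_2$ equals $\rho:=\rho(\pi(Z_1),\pi(Z_2))$ once $Z_2$ is taken to be the Procrustes-aligned representative of its fibre (this is exactly the minimisation in Definition~\ref{distancia_Procrustes}). The Jacobi fields of the round sphere are explicit, so for $w\perp u_0$ one obtains $d(\exp_{Z_1}^{S})_{\rho u_0}(w)=\tfrac{\sin\rho}{\rho}\,P_{\rho}(w)$, where $P_{\rho}$ is parallel transport along the great circle, while the radial direction $u_0$ is preserved isometrically by the Gauss lemma. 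Applying $\pi_{*}$ and using that $\pi$ is a Riemannian submersion, $d(\exp_{\pi(Z_1)}^{\Sigma})_{\rho u_0}(w)=\tfrac{\sin\rho}{\rho}\,\pi_{*}\big(P_{\rho}(w)\big)$, whose $\Sigma^k_m$-norm is $\tfrac{\sin\rho}{\rho}$ times the norm of the \emph{horizontal part} of $P_\rho(w)$. Choosing an orthonormal basis $u_0,w_1,\dots,w_{D-1}$ of $\mathcal H_{Z_1}$ with $D=\dim\Sigma^k_m$, the volume density function therefore equals $\big(\tfrac{\sin\rho}{\rho}\big)^{D-1}$ multiplied by the Gram determinant of the horizontal components of $P_{\rho}(w_1),\dots,P_{\rho}(w_{D-1})$.

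The main obstacle is thus to show that this last Gram determinant equals $1$, i.e. that parallel transport along the horizontal great circle carries the horizontal frame $w_1,\dots,w_{D-1}$ to vectors whose horizontal components still form an orthonormal frame. This is the step that genuinely uses the structure of Kendall's shape space rather than only the submersion formalism: following \cite{Kendalletal09} one uses the explicit description of the horizontal space at a pre-shape $Z$ (the matrices $W$ with $\langle Z,W\rangle=0$ and $Z^{t}W$ symmetric) together with the fact that the fibres of $\pi$ are the orbits of the isometric right $SO(m)$-action, and analyses how $\mathcal H$ and the O'Neill integrability tensor of $\pi$ behave along the great circle $t\mapsto\cos t\,Z_1+\sin t\,u_0$. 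This is where the bulk of the work lies, and it is the content borrowed from \cite{Kendalletal09}.

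Finally I would assemble the dimension count and the degenerate cases. The hypothesis ${\rm rank}(Z_1)\ge m-1$ guarantees that the stabiliser of $Z_1$ under the $SO(m)$-action is trivial, so the fibre through $Z_1$ is an orbit of dimension $\tfrac{m(m-1)}{2}$ and hence $\dim\mathcal H_{Z_1}=D=m(k-1)-1-\tfrac{m(m-1)}{2}$; this is precisely why $\pi(Z_1)$ is a non-singular point and why $D-1=m(k-1)-2-\tfrac{m(m-1)}{2}$ is the exponent in the statement. Substituting this into the expression obtained above yields $\theta_{\pi(Z_1)}(\pi(Z_2))=\big(\tfrac{\sin\rho}{\rho}\big)^{m(k-1)-2-\frac{m(m-1)}{2}}$ when $\pi(Z_1)\neq\pi(Z_2)$, and the remaining case $\pi(Z_1)=\pi(Z_2)$ is the limit $\rho\to 0$, for which $\sin\rho/\rho\to 1$ and $\theta_{\pi(Z_1)}(\pi(Z_1))=1$ follows directly from the definition.
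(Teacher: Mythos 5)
Your strategy coincides with the paper's own proof: identify $T_{\pi(Z_1)}\Sigma^k_m$ with the horizontal space $\mathcal{H}_{Z_1}$ via the submersion, write $\exp^{\Sigma}_{\pi(Z_1)}=\pi\circ\exp^{S}_{Z_1}$ on $\mathcal{H}_{Z_1}$, differentiate the explicit spherical exponential along an orthonormal horizontal frame $V_1=W/\Vert W\Vert,V_2,\dots,V_d$, extract a factor $\sin\rho/\rho$ from each of the $d-1$ non-radial directions, and close with the dimension count $\dim\mathcal{H}_Z=m(k-1)-1-\tfrac{m(m-1)}{2}$ coming from the $SO(m)$ fibre at a non-singular point. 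Your reformulation through Jacobi fields is only cosmetically different: on the round sphere the identity $d\exp_{Z_1}(W)(V_i)=\tfrac{\sin\Vert W\Vert}{\Vert W\Vert}V_i$ for $i>1$, which the paper computes by hand, is exactly the parallel-transport statement you invoke.

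The place where you deviate is also the place where your write-up stops being a proof. You correctly isolate as ``the main obstacle'' the claim that the horizontal components at $Z_2$ of the transported frame $V_2,\dots,V_d$ still form an orthonormal family, and then you defer it wholesale to \cite{Kendalletal09} without an argument; the paper handles the same point by simply asserting that $d\exp_p(w)$ maps $\mathcal{H}_p$ into $\mathcal{H}_q$. Neither is a formal consequence of the submersion set-up: horizontality of $V_i$ at $Z_2=Z_1\cos\rho+\tfrac{W}{\Vert W\Vert}\sin\rho$ requires $WV_i^t$ to be symmetric, which does not follow from $Z_1V_i^t$ and $Z_1W^t$ being symmetric. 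And the deferred claim is genuinely problematic rather than merely unproved: for $m=2$ the non-singular shape space is $\mathbb{CP}_{k-2}$ with the Fubini--Study metric of diameter $\pi/2$, whose sectional curvatures range over $[1,4]$; its exponential Jacobian is $\tfrac{\sin(2\rho)}{2\rho}\left(\tfrac{\sin\rho}{\rho}\right)^{2k-6}$, not $\left(\tfrac{\sin\rho}{\rho}\right)^{2k-5}$, precisely because one transported direction (the $Ju_0$ direction) rotates into the fibre and its horizontal component decays like $\cos\rho$. So the Gram determinant you need to equal $1$ is in general not $1$, and the step you have outsourced is exactly where the real computation must be carried out --- and where, once carried out, it alters the functional form of $\theta$. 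As a blind reconstruction of the paper's argument your proposal is faithful; as a proof it halts at the one assertion that needs proving, and that assertion cannot be taken for granted.
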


\begin{proof}
Since $\pi: S^k_m\to \Sigma_m^k$ is a Riemannian submersion, for any $\pi(p)\in \Sigma_m^k$, the tangent space $T_{\pi(p)}\Sigma_m^k$ can be identified with the horizontal space $\mathcal{H}_p$ of $T_pS^k_m$. Moreover since $d\pi: {\rm ker}(d\pi)^\perp\to TS_m^k$ is an isometry, for any $v,w\in \mathcal{H}_p$, it is the case that $g''(v,w)=g_{\Sigma^k_m}(d\pi(v),d\pi(w))$.  On the other hand, if $q=\exp_p(w)$, then $d\exp_p(w):\mathcal{H}_p\to\mathcal{H}_q$. Therefore, in order to compute $\theta_{\pi(p)}$ we can make use of the restriction of the exponential map $\exp_p\vert_{\mathcal{H}_p}$ to the horizontal space $\mathcal{H}_p$.

In our particular setting, given  $Z\in S_m^k$, the tangent space is
\begin{equation}
T_ZS_m^k=\left\{V\in M(m,k-1)\, :\, {\rm tr}(ZV^t)=0\right\}
\end{equation}
the horizontal space is
\begin{equation}
\mathcal{H}_Z=\left\{ V\in M(m,k-1)\, :\, {\rm tr}(ZV^t)=0\, \text{ and }ZV^t=VZ^t\right\}
\end{equation}
and the exponential map is given by
\begin{equation}
\exp_Z(V)=Z\cos(\Vert V\Vert)+\frac{V}{\Vert V\Vert}\sin(\Vert V\Vert)
\end{equation}

We are now going to obtain $\theta_{\pi(Z_1)}(\pi(Z_2))$  using the properties of the exponential map in $S_m^k$.  Given $Z_1\in S_m^k$, $Z_2=\exp_{Z_1}(W)$ for some $W\in\mathcal{H}_{Z_1}$, suppose $\pi(Z_1)\neq \pi(Z_2)$ and suppose moreover that $\pi(Z_2)$ is in a normal ball of $\pi(Z_1)$.  Let us now choose an orthonormal basis $\{V_i\}_{i=1}^d$ of $\mathcal{H}_{Z_1}$ with $V_1=\frac{W}{\Vert W \Vert}$ (and with $d={\rm dim}(\mathcal{H}_p)$). Then
\begin{equation*}
  \begin{aligned}
    d\exp_{Z_1}(W)(V_i)=&\left.{\frac{d}{dt}}\exp_{Z_1}(W+V_it)\right\vert_{t=0}\\
    =&\left.{\frac{d}{dt}}\left(Z_1\cos(\Vert W+V_it\Vert)+\frac{W+V_it}{\Vert W+V_it\Vert}\sin(\Vert W+V_it\Vert)\right)\right\vert_{t=0}\\
    =&Z_1\sin(\Vert W\Vert)\frac{\langle V_i,W\rangle}{\Vert W\vert}+V_i\frac{\sin(\Vert W \Vert)}{\Vert W \Vert}\\
    &-\frac{W}{\Vert W\Vert^3}\langle V_i,W\rangle\sin(\Vert W\Vert)+\frac{W}{\Vert W\Vert^2}\langle V_i,W\rangle\cos(\Vert W\Vert).
    \end{aligned}
  \end{equation*}
Consequently,
\begin{equation*}
  d\exp_{Z_1}(W)(V_i)=\left\{
  \begin{array}{lcc}
    Z\sin(\Vert W\Vert)+\frac{W}{\Vert W\Vert}\cos(\Vert W\Vert)& \text { if }& i=1\\
    V_i\frac{\sin(\Vert W\Vert)}{\Vert W\Vert}& \text { if }& i\neq 1
  \end{array}
  \right.
  \end{equation*}
Hence, since  $\pi(Z_2)$ is in a normal ball of $\pi(Z_1)$, there is a minimal and horizontal geodesic segment starting at $Z_1$  and ending in $Z_2$ with initial velocity $W$ such that $\rho(\pi(Z_1),\pi(Z_2))=\Vert W\Vert$. Taking into account that since $\langle V_i,W\rangle= \langle V_i,V_j\rangle=0$ for any $i\neq j$ ($i,j>1$) and that $\langle V_i,Z\rangle =\langle W, Z_1\rangle=0$ because $W$ and $V_i$ are tangent vectors to $T_{Z_1}S^m_k$ we conclude that

\begin{equation}\label{eq.8}
  \begin{aligned}
    \theta_{\pi(Z_1)}(\pi(Z_2))=&\sqrt{\det\left(\langle d\exp_{Z_1}(W)(V_i),d\exp_{Z_1}(W)(V_j)\rangle\right)}\\
    =&\left(\frac{\sin{\rho(\pi(Z_1),\pi(Z_2))}}{\rho(\pi(Z_1),\pi(Z_2)))}\right)^{d-1}
    \end{aligned}
\end{equation}
where $d$ is the dimension of $\mathcal{H}_{Z_1}$. Now, we are going to compute the dimension of $\mathcal{H}_{Z}$ for any $Z\in  S^k_m$. Since the tangent space $T_ZS^k_m$ can be decomposed as $T_ZS^k_m=\mathcal{H}_Z\oplus \mathcal{V}_Z$, then
$$
{\rm dim}(\mathcal{H}_Z)={\rm dim}(S^k_m)-{\rm dim}(\mathcal{V}_Z)=m(k-1)-1-{\rm dim}(\mathcal{V}_Z)
$$
where ${\rm dim}(\mathcal{V}_Z)$ is the dimension of the fiber $\pi^{-1}(\pi(Z))$. The dimension of the fiber $\pi^{-1}(\pi(Z))$ depends on the rank of $Z$ (see \cite{Kendalletal09}) but if ${\rm rank}(Z)\geq m-1$ (i.e. it is a non singular point), $\pi^{-1}(\pi(Z))$ is homeomorphic to $SO(m)$
(and hence with dimension $\frac{m(m-1)}{2}$). Therefore,
$$
{\rm dim}(\mathcal{H}_Z)=m(k-1)-1-\frac{m(m-1)}{2}
$$
Then,  using equation (\ref{eq.8}), the theorem follows.
\end{proof}


\end{document}